\DeclareMathAlphabet{\mathcal}{OMS}{cmsy}{m}{n}
\theoremstyle{definition}
\newtheorem{Lemma}{Lemma}
\newtheorem{Definition}[Lemma]{Definition}
\newtheorem{Theorem}[Lemma]{Theorem}
\begin{document}

\newcommand{\B}{\textrm{Bott}}
\newcommand{\T}{\textrm{Tr}}

\newcommand{\PP}{P^\bot}
\newcommand{\I}{\mathds{1}}
\newcommand{\La}{\Lambda}

%% Text
\newcommand{\eq}{Eq.}%No extra space when used with reftex (->auto ~)
\newcommand{\eqs}{Eqs.}%No extra space when used with reftex (->auto ~)
\newcommand{\cf}{\textit{cf. }}%adv : that is to say; in other words
\newcommand{\ie}{\textit{i.e. }}%adv : that is to say; in other words
\newcommand{\eg}{\textit{e.g. }}%[syn: f.eks., for example, for instance]
\newcommand{\etal}{\emph{et al.}}
\def\i{\mathrm{i}}

\renewcommand{\thesection}{\arabic{section}}
\renewcommand{\thesubsection}{\thesection.\arabic{subsection}}
\renewcommand{\thesubsubsection}{\thesubsection.\arabic{subsubsection}}

\title{The Bott index of two unitary operators and the integer quantum Hall effect }

\author{Daniele Toniolo\,\orcidlink{0000-0003-2517-0770}}
\email{d.toniolo@ucl.ac.uk}

\affiliation{Department of Computer Science, University College London, United Kingdom}
\affiliation{Department of Physics and Astronomy, University College London, United Kingdom}

\date{\today}

%----------------------------------------------------------------------------------------------
%                                ABSTRACT
%----------------------------------------------------------------------------------------------

\begin{abstract}
The Bott index of two unitary operators on an infinite dimensional Hilbert space is defined. Homotopic invariance with respect to  multiplicative unitary perturbations of the type identity plus trace class  and the ``logarithmic'' law for the index are proven. The index and its properties are then extended to the case of a pair of invertible operators. An application to the physics of two dimensional quantum systems proves that the index is equal to  the Chern number therefore showing that the transverse Hall conductance is an integer.  
\end{abstract}

\maketitle

%---------------------------------------------------------------------------
%                                  INTRODUCTION
%---------------------------------------------------------------------------

\section{Introduction}

The Bott index of two unitary matrices arose to distinguish almost commuting unitary matrices and exactly commuting unitaries. Two equivalent formulations have been given in terms of a winding number and of a K-theoretic invariant in the works of Exel and Loring \cite{Loring_1988, Exel_Loring_1989, Exel_Loring_1991, Exel_1993}.

More recently Loring and Hastings have used the index in the context of a classification of topological phases of matter  \cite{Loring_Hastings_2010, Hastings_Loring_2010}, also with numerical applications \cite{Hastings_Loring_2011}.

The discovery of the quantization of the Hall conductance in disordered samples of two dimensional semiconductors subjected to a strong perpendicular magnetic field in 1980 \cite{von_Klitzing_1980} initiated the mathematical investigation of the phenomenon. Explanations in terms of the index of a Fredholm operator \cite{Bellissard_1988, Bellissard_1994} and of the index of two projections \cite{Avron_1990, Avron_Seiler_Simon_1994, Avron_Seiler_Simon_1994_2, Avron_les_houches_1994} have been provided. The integer quantum Hall effect is seen nowadays as an instance within the table of topological insulators first proposed by Kitaev \cite{Kitaev_2009}. A rigorous explanation of the table and of the associated phenomena of the bulk-boundary correspondence is given, employing K-theory, in the monograph \cite{Prodan_Schulz_Baldes}.
A new approach has recently emerged, particularly favorable for numeral estimations, where the topological index, dubbed spectral localizer, employs the system's Hamiltonian rather than the Fermi projection \cite{Loring_2015, Loring_Schulz_Baldes_2017, Loring_Schulz_Baldes_2019, Schulz_Baldes_2019, Schulz_Baldes_2021}. For a comparison among the Bott index and the spectral localizer also concerning their numerical implementation see \cite{Loring_2019}.   

A recent approach to topological phenomena that emphasises the many-body aspects is in the references \cite{Bachmann_2020,Bachmann_2021}. 

In the physics literature the Bott index has been employed to characterize several topological phenomena, besides the classification initiated in \cite{Hastings_Loring_2010}. To study time-dependent perturbations \cite{Refael_2015, Ge_Rigol_2017, Toniolo_2018}, time reversal invariant systems \cite{Loring_Hastings_2010} and in quasi periodic systems \cite{Huang_2018, Nielsen_2020, Yoshii_2021}. The former list is far from being exhaustive.

The Bott index is well defined and assumes integer values for unitaries defined on finite dimensional Hilbert spaces, this makes it particularly suitable for numerical computation of the index when the quantum system is defined for example on a 2-dimensional finite torus. On the contrary Fredholm operators with non vanishing index can only be defined on infinite dimensional Hilbert spaces.

The present work extends the definition of the Bott index to a pair of unitary operators on an infinite dimensional Hilbert space and prove features like homotopic invariance under multiplicative unitary perturbations of the type identity plus trace class and the ``logarithmic law''. For a review of these properties in the context of the Fredholm operators it is possible to refer to \cite{Avron_Seiler_Simon_1994_2,Murphy_1994,Lax_2002, Roe_2015}. 

The structure of the paper is as follows: In section \ref{definition} the Bott index of two unitary operators is defined. Section \ref{homotopy} discusses the classes of maps $ U(s) $, $ V(s) $ in the unitary operators that ensure homotopic invariance. Section \ref{logarithmic_law} establishes the ``logarithmic law'' namely $ \B(e^{iA} e^{iB},e^{iC}) =\B(e^{iA},e^{iC}) + \B(e^{iB},e^{iC}) $,
with $ A $, $ B $ and $ C $ self-adjoints satisfying certain trace class conditions. Section \ref{invertible} extends the definition of the Bott index to a pair of invertible operators, shortly discussing how the properties of the index carry over. Section \ref{2_dim_lattice} consider an explicit application to the physics of an infinite two dimensional lattice quantum system where the unitary operators $ U $ and $ V $ are in relation with the Fermi projection of a local Hamiltonian and with switch functions suitably defined on the lattice. In section \ref{Chern_number} the Bott index is shown to be equal to the Chern number of the Fermi projection $ P $. The quantization of the Hall conductance is also shown independently making use of a recent result of the literature \cite{Elgart_Fraas_2021}. The work concludes with a short discussion about connections of the theory of the Bott index with Fredholm index theory and the spectral localizer. Four appendices follow: \ref{appendix_C_constant} defines the operator $ \log $ and discusses its trace class properties,
\ref{vanishing_index} gives a sufficient condition for the vanishing of the Bott index. This result is presented as an appendix because more relevant to unitary matrices rather than operators. Appendix \ref{trace_class_pert} discusses unitary operators that differ by a trace class. The appendix \ref{logarithmic_law_proof} concludes with the proof of theorem \ref{theorem_mult_law} on the logarithmic law.

%-------------------------------------------------------------------------------------
%                                  DEFINITION OF THE BOTT INDEX
%-------------------------------------------------------------------------------------

\section{The Bott index} \label{definition}
In what follows $ \mathcal{I}^1 $ denotes the ideal of trace class operators in the linear bounded operators $ \mathcal{L}(\mathcal{H}) $ on $ \mathcal{H} $. $ \| \cdot \|_1 $ is the associated trace norm, $ A \in \mathcal{I}_1 $ if and only if $ \| A \|_1 := \T \sqrt{A^*A} < \infty $.  Throughout the paper $ \mathcal{H} $ is supposed to be separable.

Following Kato the logarithm of a linear operator is defined by the Dunford integral, see lemma \ref{log} appendix \ref{appendix_C_constant}. The branch cut of the logarithm is fixed on the real negative axis.

The condition $ \|[U,V] \| <2 $ is equivalent to say that $ \{ -1 \} \notin \sigma \left( UVU^{-1}V^{-1}\right) $, in fact $ \| UVU^*V^* -1 \| = \|[U,V] \| $, therefore $  \log \left( UVU^*V^*\right) $ is well defined.

Given two unitary operators $ U $ and $ V $, if $ [U,V] \in \mathcal{I}^1 $ then $  \log \left( UVU^*V^*\right) \in \mathcal{I}^1 $. This follows from appendix \ref{appendix_C_constant} with $ S = UVU^{-1}V^{-1} $ and $ T = \I $.

\begin{Definition} \label{definition_bott}
Given two unitary operators $ U: \, \mathcal{H} \rightarrow \mathcal{H} $ and $ V: \, \mathcal{H} \rightarrow \mathcal{H}$, such that $ (U-\I)(V-\I) \in \mathcal{I}^1 $, $ (V-\I)(U-\I) \in \mathcal{I}^1 $ and $ \|[U,V] \| <2 $, the Bott index of $ U $ and $ V $  is defined as:
\begin{equation} \label{def_Bott}
 \B(U,V) := \frac{1}{2\pi i}\T \log \left( UVU^*V^*\right)
\end{equation}
$ \log $ denotes the principal logarithm, namely $ \log e^{i \theta } = i \theta $, with $ \theta \in (-\pi, \pi) $. 
\end{Definition} 

The conditions $ (U-\I)(V-\I) \in \mathcal{I}^1 $ and $ (V-\I)(U-\I) \in \mathcal{I}^1 $ implies, in particular, that $ [U,V] \in \mathcal{I}^1 $. This together with $ \|[U,V] \| <2 $ makes $ \log \left( UVU^*V^*\right) $ well defined and trace class.

\begin{Lemma}
 $ \B(U,V) $ as defined in eq. \eqref{def_Bott} is an integer. 
\end{Lemma}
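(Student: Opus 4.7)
Set $W := UVU^*V^*$. The hypotheses give $[U,V] = (U-\I)(V-\I) - (V-\I)(U-\I) \in \mathcal{I}^1$, hence $W - \I = [U,V]\,U^*V^* \in \mathcal{I}^1$. Combined with $\|[U,V]\|<2$, which excludes $-1$ from $\sigma(W)$, the appendix ensures $\log W$ is trace class, so $\B(U,V)$ is a well-defined complex number. Since $W-\I\in\mathcal{I}^1$ and $W$ is unitary (hence normal), $W$ has purely discrete spectrum $\{e^{i\theta_j}\}_j$ on the unit circle with $\theta_j\in(-\pi,\pi)$ accumulating only at $0$.

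My strategy is to deform $U$ continuously to $\I$ and track the Bott index along the path. Introduce a smooth path of unitaries $t\mapsto U_t$, $t\in[0,1]$, with $U_0 = \I$ and $U_1 = U$, such that $[U_t,V]\in\mathcal{I}^1$ throughout; set $W_t := U_tVU_t^*V^*$, so $W_0 = \I$ and $W_1 = W$. Where $-1\notin\sigma(W_t)$, the principal logarithm $\log W_t$ is continuous in $t$ and
\[
  \frac{d}{dt}\T\log W_t \;=\; \T\bigl(W_t^{-1}\dot W_t\bigr).
\]
Writing $X_t := \dot U_tU_t^*$ (anti-self-adjoint since $U_tU_t^*=\I$ gives $\dot U_tU_t^* + U_t\dot U_t^*=0$), a short calculation yields $\dot W_t = X_tW_t - W_tVX_tV^*$, whence by cyclic invariance of the trace
\[
  \T\bigl(W_t^{-1}\dot W_t\bigr) \;=\; \T(X_t) - \T(X_t) \;=\; 0.
\]
Thus $\T\log W_t$ is locally constant away from the discrete set of times at which an eigenvalue of $W_t$ crosses $-1$, and at each such crossing the principal-branch jump of $\log$ contributes exactly $\pm 2\pi i$. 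Starting from $\T\log W_0 = \T\log\I = 0$, the endpoint $\T\log W_1 = 2\pi i\,\B(U,V)$ must lie in $2\pi i\,\mathbb{Z}$, giving $\B(U,V)\in\mathbb{Z}$.

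The main obstacle is constructing a path $U_t$ with all the needed properties simultaneously: $[U_t,V]\in\mathcal{I}^1$ throughout, $\dot W_t$ trace class so the derivative formula applies, and the cyclic manipulations licit even though the generator $X_t = \dot U_tU_t^*$ need not itself be trace class. A natural candidate is $U_t = e^{itH}$ with $H$ a self-adjoint logarithm of $U$ (after, if needed, perturbing $U$ by a small phase to exclude $-1$ from its spectrum); this works under the trace-class hypotheses on $(U-\I)(V-\I)$, because the combination $W_t^{-1}X_tW_t - VX_tV^*$ inherits trace-class structure from the commutator form of $W_t$ even when $X_t$ alone does not. If this direct route is obstructed, one approximates $U$ by unitaries with $U-\I$ of finite rank, applies the argument in the approximating setting, and passes to the limit using continuity of $\T\log$ on admissible trace-class perturbations of $\I$.
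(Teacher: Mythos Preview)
Your argument has a genuine gap at the key step $\T(W_t^{-1}\dot W_t)=\T(X_t)-\T(X_t)=0$. The generator $X_t=iH$ is merely bounded, not trace class, so neither term $\T(X_t)$ is defined and the cancellation is purely formal. You correctly observe that the combination $W_t^{-1}X_tW_t-VX_tV^*$ is trace class, but this only makes its trace well-defined; it does not make it vanish. In fact the identity $\T(A^{-1}XA-X)=0$ for unitary $A$, bounded $X$, and $[X,A]\in\mathcal{I}^1$ is \emph{false} in general: take $A$ the bilateral shift on $\ell^2(\mathbb{Z})$ and $X$ diagonal with entries $x_n$ such that $\sum|x_{n+1}-x_n|<\infty$ but $\lim_{n\to+\infty}x_n\neq\lim_{n\to-\infty}x_n$; then $\T(A^{-1}XA-X)=\sum_n(x_{n+1}-x_n)\neq 0$. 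The paper's own homotopic-invariance theorem explicitly requires the generator to be trace class for exactly this reason, and it is remarked afterward that along paths lacking that hypothesis the index can change.

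The only way to rescue $\T(W_t^{-1}\dot W_t)=0$ under your hypotheses is to show $\det W_t\equiv 1$, which amounts to invoking the Elgart--Fraas determinant identity $\det(U_tVU_t^{-1}V^{-1})=1$ at each $t$. But that identity applied once at $t=1$ is precisely how the paper proves the lemma directly: from $\det(UVU^*V^*)=1$ and the Fredholm-determinant/Lidskii relation $\det W=\prod_j e^{i\theta_j}=e^{\T\log W}$ one reads off $\T\log W\in 2\pi i\,\mathbb{Z}$ immediately, so the homotopy becomes superfluous. Your secondary difficulties are also real: if $\sigma(U)$ is the full unit circle no phase rotation yields a holomorphic logarithm, so $[H,V]\in\mathcal{I}^1$ is not guaranteed; and the finite-rank approximation fallback presupposes $U-\I$ compact, which the hypotheses $(U-\I)(V-\I),(V-\I)(U-\I)\in\mathcal{I}^1$ do not imply (consider $V=\I$).
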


\begin{proof}
 The work of Elgart and Fraas in \cite{Elgart_Fraas_2021} implies that with $ (U-\I)(V-\I) \in \mathcal{I}^1 $ and $ (V-\I)(U-\I) \in \mathcal{I}^1 $ then $ \det \left( UVU^*V^*\right) = 1 $, therefore denoting $ \{e^{i\theta} \} $, $ \theta \in (-\pi, \pi) $  the eigenvalues of $  UVU^*V^* $ it follows that:
 \begin{align} \label{integer}
  1=\prod_{j} e^{i\theta_j} = e^{i \sum_j \theta_j} =  e^{ \T \log \left( UVU^*V^*\right)} 
 \end{align}
This implies that $ \frac{1}{2\pi i}\T \log \left( UVU^*V^*\right) \in \mathds{Z} $.
In the first equality of eq. \eqref{integer} it is used the fact $ S \in \mathcal{I}^1 $ implies $ \det(\I + S) = \prod_j (1+ s_j) $, with $ \{s_j \} $ the eigenvalues of $ S $. In eq. \eqref{integer}, $ \T \log \left( UVU^*V^*\right) =i \sum_j \theta_j $ follows from Lidskii's theorem.
\end{proof}

%-------------------------------------------------------------------------------------
%                                                  HOMOTOPIC INVARIANCE 
%-------------------------------------------------------------------------------------

\section{Homotopic invariance} \label{homotopy}

\begin{Theorem}
Given two maps $ U(s):[0,1] \rightarrow \mathcal{U}(\mathcal{H}) $ and $ V(s):[0,1] \rightarrow \mathcal{U}(\mathcal{H}) $ with $ U(0)=U $, $ V(0)=V$, such that $ \|[U(s),V(s)] \| <2 $ $ \forall s \in [0,1] $ and $ (U-\I)(V-\I) \in \mathcal{I}_1 $, $ (V-\I)(U-\I) \in \mathcal{I}_1 $.  If  $ \forall s \in [0,1] $, $ U(s)-U \in \mathcal{I}_1 $ and $ V(s)-V \in \mathcal{I}_1 $, then 
\begin{equation}
 \B(U(s),V(s))=\B(U,V)
\end{equation}
\end{Theorem}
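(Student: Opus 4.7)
The plan is to show that $s\mapsto\B(U(s),V(s))$ is well defined and continuous on $[0,1]$, and then use integer valuedness (already established in the preceding lemma) to conclude that it is constant on the connected interval $[0,1]$.

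First I would verify that the hypotheses of Definition~\ref{definition_bott} are met at every $s$. Writing $U(s)-\I=(U(s)-U)+(U-\I)$ and analogously for $V(s)$, the product $(U(s)-\I)(V(s)-\I)$ expands into four summands: three contain a trace class factor $U(s)-U$, $V(s)-V$ paired with a bounded factor, and the fourth is $(U-\I)(V-\I)\in\mathcal{I}_1$ by hypothesis. The same decomposition handles $(V(s)-\I)(U(s)-\I)$. Combined with the assumed bound $\|[U(s),V(s)]\|<2$, this shows that $\B(U(s),V(s))$ is defined and integer valued for every $s$.

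For continuity, set $W(s):=U(s)V(s)U(s)^*V(s)^*$, so that $W(s)-\I=[U(s),V(s)]U(s)^*V(s)^*\in\mathcal{I}_1$. Interpreting $U(s),V(s)$ as continuous in operator norm (the natural meaning of a map into $\mathcal{U}(\mathcal{H})$), I would establish trace norm continuity of $s\mapsto W(s)-\I$ by telescoping $W(s)-W(s_0)$ into a finite sum of products in which exactly one factor is either the fixed $U-\I$, $V-\I$ or a small trace class increment $U(s)-U(s_0)$, $V(s)-V(s_0)$, while the remaining factors are uniformly bounded unitaries; the inequality $\|AB\|_1\le\|A\|_1\|B\|$ then controls the trace norm. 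The Dunford integral defining $\log$ (Lemma~\ref{log}) is continuous under trace norm perturbations provided $-1$ stays bounded away from $\sigma(W(s))$; this uniformity is guaranteed by $\|[U(s),V(s)]\|<2$ and operator norm continuity on the compact interval $[0,1]$. Thus $s\mapsto\T\log W(s)$ is continuous, and the integer valued $s\mapsto\B(U(s),V(s))$ is constant.

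The main obstacle lies in the telescoping step: $W(s)-W(s_0)$ is not obviously trace class term by term, and one must rearrange each summand carefully so that a single genuine trace class factor supplied by the theorem's hypotheses carries the smallness, while the auxiliary factors contribute only bounded norm. A secondary technical point is choosing the Dunford contour uniformly in a neighborhood of any $s_0$; this requires continuity of $\sigma(W(s))$ in $s$ and the strict separation from $-1$ on $[0,1]$, both of which follow from the operator norm continuity of the paths together with compactness.
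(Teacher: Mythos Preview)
Your strategy—continuity of $s\mapsto\B(U(s),V(s))$ plus integer-valuedness—is genuinely different from the paper's, which instead differentiates and shows $\partial_s\B(U(s),V(s))=0$ by cyclicity of the trace.

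The gap is exactly where you flag it, and it is not merely a matter of careful rearrangement. You call $U(s)-U(s_0)$ a ``small trace class increment,'' but the hypotheses only say $U(s)-U\in\mathcal{I}_1$ for each fixed $s$; this makes $U(s)-U(s_0)=(U(s)-U)-(U(s_0)-U)$ trace class, yet gives no control on $\|U(s)-U(s_0)\|_1$. Operator-norm continuity of the path does not upgrade to trace-norm continuity. In any telescoping of $W(s)-W(s_0)$ there are unavoidable terms of the form $(\text{bounded})\cdot(U(s)-U(s_0))\cdot(\text{bounded})$, whose trace norm you can only bound by a multiple of $\|U(s)-U(s_0)\|_1$. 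Your alternative suggestion—that ``the fixed $U-\I$, $V-\I$'' supply the trace class factor—does not work either: neither $U-\I$ nor $V-\I$ is individually trace class, only their products are, and those products do not factor out of every telescoped term.

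The paper sidesteps this by restricting (with only a pointwise justification via Appendix~\ref{trace_class_pert}) to paths of the explicit form $U(s)=e^{isA}U$, $V(s)=e^{isB}V$ with $A,B$ self-adjoint trace class. For such paths $\partial_sU(s)$ and $\partial_sV(s)$ are trace class, and the derivative computation \eqref{homotopic}--\eqref{stability} goes through. Your continuity argument would also succeed in this restricted class, since then $\|U(s)-U(s_0)\|_1\le |s-s_0|\,\|A\|_1$ and Lemma~\ref{log} finishes the job. But for the theorem as literally stated—arbitrary maps with $U(s)-U\in\mathcal{I}_1$ pointwise and only norm continuity—neither proof closes without an extra regularity hypothesis (trace-norm continuity, or differentiability with trace-class derivative) on the paths.
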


\begin{proof}
Let us compute the derivative $ \partial_s \B(U(s),V(s)) $ assuming that $ \partial_s U(s) $ and $ \partial_s V(s) $ exist.
\begin{align}
 \partial_s \B(U(s),V(s)) &= \frac{1}{2 \pi i} \T \left[ \partial_s (U(s)V(s)U^*(s)V^*(s)) (V^*(s)U^*(s)V(s)U(s)) \right] \\
 &=\frac{1}{2 \pi i} \T [ (\partial_s U(s))U^*(s) + U(s)(\partial_s V(s))V^*(s)U^*(s) \nonumber \\ 
 & \hspace{1.2cm} + U(s)V(s)(\partial_sU(s)^*)U(s)V^*(s)U^*(s) \nonumber \\ & \hspace{1.2cm}  + U(s)V(s)U(s)^*(\partial_sV^*(s))V(s)U(s)V^*(s)U^*(s)   ] \\
 &= \frac{1}{2 \pi i} \T [ U^*(s)(\partial_s U(s)) + \partial_s V(s))V^*(s)+ V(s)(\partial_sU(s)^*)U(s)V^*(s) \nonumber \\ 
 & \hspace{1.2cm}  + V(s)U(s)^*(\partial_sV^*(s))V(s)U(s)V^*(s)  ] \\
 &= \frac{1}{2 \pi i} \T [ U^*(s)(\partial_s U(s)) + \partial_s V(s))V^*(s)- V(s)U^*(s) (\partial_sU(s))V^*(s) \nonumber \\ 
 & \hspace{1.2cm}  - V(s)U(s)^*V^*(s)(\partial_sV(s))U(s)V^*(s)  ] \\
 &= \frac{1}{2 \pi i} \T [ V^*(s) U^*(s)(\partial_s U(s)) V(s)  -U^*(s)(\partial_s U(s)) + V^*(s) (\partial_s V(s)) \nonumber \\ 
 & \hspace{1.2cm}  - U^*(s)V^*(s)(\partial_s V(s))U(s)  ]   \label{homotopic}
\end{align}
It occurs that $ U(s)-U \in \mathcal{I}_1 $, $ V(s)-V \in \mathcal{I}_1 $ if and only if $ U(s) = e^{isA}U $, $ V(s) = e^{isB}V$, with $ A $ and $ B $ trace class self-adjoint. This also implies that the maps $ U(s) $ and $ V(s) $ are continuous in the trace norm, as shown in appendix \ref{trace_class_pert}. Moreover $ U(s) $ and $ V(s) $ must satisfy $ \|[U(s),V(s)] \| <2 $ $ \forall s \in [0,1] $.  Equation \eqref{homotopic} then reads
\begin{align} \label{stability}
 \partial_s \B(U(s),V(s)) &= \frac{1}{2 \pi i} \T ( V^* e^{-isB}iA e^{isB}V  - i A - U^*e^{isA} iBe^{isA} U  + i B ) = 0
\end{align}
\end{proof}

Any unitary matrix $ U $ and $ V $ can be embedded in an infinite dimensional Hilbert space in the form identity plus trace class, this means that their Bott index is non vanishing if and only if the maps $ U(s) $, $ V(s) $ that connect them to the identity fail to satisfy  $ \|[U(s),V(s)] \| <2 $ $ \forall s \in [0,1] $. In this way we have recovered the homotopic invariance for the Bott index of matrices as already formulated by Exel and Loring \cite{Exel_Loring_1991,Hastings_Loring_2011}.

On the other hand in section \ref{Chern_number}  I will show an example of maps $ U(s) $ and $ V(s) $ such that $ U(0)=\I=V(0) $,  $ \|[U(s),V(s)] \| <2 $ $ \forall s \in [0,1] $ but with $ U(1)-\I $ and $ V(1)-\I $ not in the trace class. Along such a path the Bott index can vary. I will exploit this fact to recover the Chern number from the Bott index.

%--------------------------------------------------------------------------------------
%                                LOGARITHMIC LAW
%--------------------------------------------------------------------------------------

\section{Logarithmic law} \label{logarithmic_law}
Looking at the form of the Bott index, as given in the definition \ref{def_Bott}, the logarithmic law, that Fredholm operators satisfy, might not come as a surprise, but for matrices is actually wrong. An example easy to verify is given by the couple
\begin{align}
U=
\begin{pmatrix}
 0 & 0 & 1 \\
 1 & 0 & 0 \\
 0 & 1 & 0
\end{pmatrix}
\hspace{1cm} V= 
\begin{pmatrix}
 e^{\frac{2}{3}\pi i} & 0 & 0 \\
 0 & e^{-\frac{2}{3}\pi i} & 0 \\
 0 & 0 & 1
\end{pmatrix}
\end{align}
$ \B(U,V)=-1$, $ \B(U^2,V)=1$. Nevertheless it is possible to prove the following:
\begin{Theorem} \label{theorem_mult_law}
 Given $ A $, $ B $ and $ C $ self-adjoint operators such that $ [A,B] \in \mathcal{I}_1 $, $ [C,B] \in \mathcal{I}_1 $, $ \|[e^{itA},e^{itB}] \| <1 $ and $ \| [e^{itC},e^{itB}] \| < 1 $, $ \forall t \in [0,1] $, it holds:
\begin{equation}
   \T \log \left( e^{iA}e^{iB}e^{iC}e^{-iB}e^{-iA}e^{-iC} \right) = \T \log \left( e^{iA}e^{iB}e^{-iA}e^{-iB} \right) + \T \log \left( e^{iB}e^{iC}e^{-iB}e^{-iC} \right)
\end{equation}
With the further hypothesis $ (e^{iA}-\I)(e^{iB}-\I) \in \mathcal{I}_1 $ and $ (e^{iB}-\I)((e^{iC}-\I) \in \mathcal{I}_1 $, it is
\begin{equation}
 \B(e^{iA}e^{iB},e^{iC})=\B(e^{iA},e^{iC}) + \B(e^{iB},e^{iC})
\end{equation}
\end{Theorem}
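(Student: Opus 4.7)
I read the theorem as having a typographical error in the first right-hand-side term of the trace identity: as typeset, $\T\log(e^{iA}e^{iB}e^{-iA}e^{-iB})$ equals $2\pi i\B(e^{iA},e^{iB})$, which is incompatible with the concluding Bott index identity $\B(e^{iA}e^{iB},e^{iC})=\B(e^{iA},e^{iC})+\B(e^{iB},e^{iC})$. The intended expression is $\T\log(e^{iA}e^{iC}e^{-iA}e^{-iC})=2\pi i\B(e^{iA},e^{iC})$, under which the two equations are consistent and the Bott index identity follows from the trace identity by dividing by $2\pi i$. I therefore prove the corrected trace identity; if one insists on the first term literally as printed, no proof exists because a one-dimensional counterexample with $A=B$ immediately falsifies the equation.

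The strategy is homotopy on the $A$ factor, in the spirit of Section~\ref{homotopy}. Define
\begin{equation*}
F(t):=\T\log(e^{itA}e^{iB}e^{iC}e^{-iB}e^{-itA}e^{-iC})-\T\log(e^{itA}e^{iC}e^{-itA}e^{-iC})-\T\log(e^{iB}e^{iC}e^{-iB}e^{-iC})
\end{equation*}
for $t\in[0,1]$. At $t=0$ the middle summand is $\T\log\I=0$ while the first and third terms coincide, giving $F(0)=0$. Establishing $\partial_tF\equiv 0$ on $[0,1]$ then yields $F(1)=0$, the desired identity.

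I compute $\partial_tF$ via $\partial_t\T\log M(t)=\T(M(t)^{-1}\partial_tM(t))$, applicable because $\|[e^{itA},e^{itB}]\|<1$ and $\|[e^{itC},e^{itB}]\|<1$ keep the relevant unitaries' spectra off $\{-1\}$. Factor the first log-argument as $S(t)e^{-iC}$ with $S(t)=e^{itA}Re^{-itA}$ and $R=e^{iB}e^{iC}e^{-iB}$, and the second as $T(t)e^{-iC}$ with $T(t)=e^{itA}e^{iC}e^{-itA}$. Since $[A,e^{\pm itA}]=0$, one has $\partial_tS=[iA,S]$ and $\partial_tT=[iA,T]$; a cyclic-trace reduction then collapses $\partial_tF(t)$ to $\T(R^{-1}iAR-e^{-iC}iAe^{iC})$. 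Writing $R=W_{BC}\,e^{iC}$ with $W_{BC}=e^{iB}e^{iC}e^{-iB}e^{-iC}$, a further cyclic step rewrites this as $\T(W_{BC}^{-1}[iA,W_{BC}])$.

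The main obstacle is that $iA$ is not itself trace class, so the naive splitting $W_{BC}^{-1}[iA,W_{BC}]=W_{BC}^{-1}iAW_{BC}-iA$ produces formally divergent terms. The resolution is the classical identity $\T(W^{-1}AW-A)=0$ for bounded self-adjoint $A$ and unitary $W$ with $W-\I\in\mathcal{I}^1$, proved by expanding $W=\I+S$ and using $[A,S]$ trace class together with $SS'=S'S$ (where $W^{-1}=\I+S'$); this applies here because $(e^{iB}-\I)(e^{iC}-\I)\in\mathcal{I}^1$ and the Elgart--Fraas result \cite{Elgart_Fraas_2021} place $W_{BC}-\I$ in the trace class. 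Notably $[A,C]$ never enters the argument; only $[A,B]$ and $[C,B]$ are used, through the structure of $R=e^{iB}e^{iC}e^{-iB}$, which is exactly why the identity holds under precisely the stated hypotheses. With $\partial_tF\equiv 0$ and $F(0)=0$, the trace identity, and hence the Bott index identity, follows.
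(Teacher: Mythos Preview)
Your identification of the typo in the first right-hand summand is correct; the paper's own proof (Appendix~\ref{logarithmic_law_proof}) indeed lands on $\T\log(e^{iA}e^{iC}e^{-iA}e^{-iC})$ there, matching the Bott-index statement.

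Your route is genuinely different from the paper's. The paper scales \emph{all three} exponents simultaneously, sets $g(t)=\T\log(e^{tA}e^{tB}e^{tC}e^{-tB}e^{-tA}e^{-tC})$, computes $g'(t)=2t\,(\T[A,C]+\T[B,C])$, integrates to $g(1)=\T[A,C]+\T[B,C]$, and only then identifies each summand with a $\T\log$ term via the Bott--Chern (Helton--Howe--Pincus) formula of Section~\ref{Chern_number}. You instead scale only $e^{itA}$, form the difference $F(t)$ of the two sides, and drive $F'$ to zero through the clean identity $\T(W_{BC}^{-1}[iA,W_{BC}])=0$ for $W_{BC}-\I\in\mathcal{I}_1$. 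Your argument is more self-contained (no appeal to an auxiliary determinant formula); the paper's makes the trace-class bookkeeping more transparent and exhibits explicitly which commutators must lie in $\mathcal{I}_1$.

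There is, however, a genuine gap. You assert that ``$[A,C]$ never enters the argument'' and that the literal hypotheses $[A,B],[C,B]\in\mathcal{I}_1$ suffice. They do not. Your corrected middle term $\T\log(e^{itA}e^{iC}e^{-itA}e^{-iC})$ is well-defined only if $[e^{itA},e^{iC}]\in\mathcal{I}_1$, i.e.\ only if $[A,C]\in\mathcal{I}_1$; and the left-hand side is equally ill-defined under the literal hypotheses, since one checks that $e^{itA}e^{iB}e^{iC}e^{-iB}e^{-itA}e^{-iC}-\I$ differs from $e^{itA}e^{iC}e^{-itA}e^{-iC}-\I$ only by a trace-class correction. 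In fact the printed hypotheses are themselves mislabelled: the paper's proof actually uses $[A,C],[B,C]\in\mathcal{I}_1$ together with $\|[e^{tA},e^{tC}]\|,\|[e^{tB},e^{tC}]\|<1$. Under \emph{those} hypotheses your computation does go through---in particular $[iA,R]$ is trace class because the $[iA,e^{\pm iB}]$ contributions cancel, via the Jacobi identity, against trace-class $[e^{iB},e^{iC}]$ and $[iA,e^{iC}]$ terms---but as written your $F(t)$ is not defined for any $t$, so the homotopy cannot even start. (Minor aside: $W_{BC}-\I\in\mathcal{I}_1$ follows directly from $[B,C]\in\mathcal{I}_1$; Elgart--Fraas is not needed for that step.)
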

The proof follows the same ideas employed in the proof on the equivalence Bott-Chern therefore it is postponed after that in appendix \ref{logarithmic_law_proof}.

It is an easy consequence of the Baker-Campbell-Hausdorff formula, see for example \cite{Miller, Hall} for reference, that $ \T \log $ is additive in a neighbor of the identity, we are not making such an assumption here.  

The class of unitary operators that can be written in the form $ e^{A} $, with $ A $ skew self-adjoint, is rather large. Any unitary operator $ W $ with a spectral gap, namely with spectrum strictly contained in the complex unit circle, is such that there exist skew self-adjoint operators $ A_1,\ldots, A_n$ with $ W = \prod_j e^{A_j} $, proposition 2.1.6 of \cite{Rordam_2000}. Moreover the spectral gap condition implies that $ W $ is homotopic to the identity within the unitary operators (there exist a norm continuous map with initial point the identity and final point $ W $).

The example with the unitary matrices $ U $ and $ V $ presented at the beginning of this section does not fit into theorem  \ref{theorem_mult_law}. Any pair of unitary matrix can be written as $e^{iS}$, $e^{iT}$ with $ S $ and $ T $ Hermitean, this follows from the spectral theorem and equation \eqref{exp_pro} in appendix \ref{trace_class_pert}. If $\| [e^{itS},e^{itT}] \| <2 $ for all $ t \in [0,1] $ then $ \B \left( e^{itS},e^{itT} \right) $ is vanishing because of homotopic invariance. Since $ \B(U,V) =-1 $ the condition $\| [e^{itS},e^{itT}] \| <2 $ cannot be met. We have seen in section \ref{homotopic} that for two unitary operators, $\| [e^{itA},e^{itB}] \| <2 $  does not imply that $ \B(e^{itA},e^{itB}) $ is vanishing.

%-------------------------------------------------------------------------------------
%                                   BOTT INDEX OF INVERTIBLES
%-------------------------------------------------------------------------------------

\section{The Bott index of two invertible operators} \label{invertible}
The Bott index of two invertible operators, $ \rho $ and $ \eta $ can be defined similarly as done for unitary operators. To ensure that $  \log \left( \rho \eta \rho^{-1} \eta^{-1} \right) $ is well defined, having chosen the branch cut of the logarithm on the real negative axis,  we need that $ \sigma( \rho \eta \rho^{-1} \eta^{-1} ) $ does not contain any real negative value $ \sigma( \rho \eta \rho^{-1} \eta^{-1} ) \bigcap \mathds{R^-} =\emptyset $.  The trace class conditions that ensure  $  \log \left( \rho \eta \rho^{-1} \eta^{-1} \right) \in \mathcal{I}_1 $ and $ \B( \rho,\eta) \in \mathds{Z} $ are similarly to the unitary case, see \cite{Elgart_Fraas_2021}, $ ( \rho - \I)(\eta-\I) \in \mathcal{I}_1 $, $ (\eta-\I)( \rho - \I) \in \mathcal{I}_1 $. This implies $ \rho \eta \rho^{-1} \eta^{-1} = \I + \textrm{trace class}$. Denoting with $ \lambda_j = |\lambda_j|e^{i\theta_j} $, $ \theta_j \in (-\pi, \pi) $, the set of eigenvalues of $ \rho \eta \rho^{-1} \eta^{-1} $ we get:
\begin{equation} \label{prod_eign}
 1= \det \left( \rho \eta \rho^{-1} \eta^{-1} \right) = \prod_j \lambda_j  = \prod_j |\lambda_j| e^{i\theta_j} = \prod_j e^{i\theta_j}  \Rightarrow \frac{1}{2 \pi}\sum_j \theta_j \in \mathds{Z}  
\end{equation}
In equation \ref{prod_eign} it has been used: $ 1 =  \prod_j \lambda_j = |\prod_j \lambda_j| = \prod_j |\lambda_j| $, implying  $ 0 = \sum_j \log |\lambda_j| $, then
\begin{equation} \label{unitary_part}
 \B( \rho,\eta) := \frac{1}{2 \pi i} \T \log \left( \rho \eta \rho^{-1} \eta^{-1} \right) = \frac{1}{2 \pi i} \sum_j \log \lambda_j =
 \frac{1}{2 \pi i} \sum_j \left( \log |\lambda_j| + i \theta_j \right) = \frac{1}{2 \pi} \sum_j  \theta_j \in  \mathds{Z} 
\end{equation}

% This generalization to invertible operators might be useful when they are closed to unitary operators, namely $ \| \rho \rho^* - \rho^*\rho \| \ll 1$, then the spectrum of $ \rho \eta \rho^{-1} \eta^{-1}  $ is contained in a thin region surrounding the complex unit circle.  

The polar decomposition of $ \rho $ is $ \rho = U_\rho |\rho| $, with $ U_\rho := \rho |\rho|^{-1} $ unitary. If $ \rho $ is normal then $ U_\rho $ and $ |\rho| $ commute, in general this is not the case. 

Let us show that the positive operators $ |\rho| $ and $|\eta| $ crucially affect $ \B(\rho,\eta) $.
 We start considering that the proof of the homotopic invariance of the Bott index of unitary operators carries unchanged to the case of invertible operators. Let $ \eta $ be unitary and consider the homotopy $ \rho(s)=U_\rho|\rho|^s $ with $ s\in [0,1] $. $ |\rho| > 0 $ therefore $ |\rho|^s $ is well defined. Equation \eqref{homotopic} then reads:
\begin{align}
\partial_s \B(\rho(s),\eta) = \frac{1}{2 \pi i} \T ( \eta^{-1} s|\rho|^{-1}  \eta  - s|\rho|^{-1} )
\end{align}
If the trace above vanishes then $ \B(\rho,\eta)=\B(U_\rho,\eta) $.

The following lemma provides a generalization of Lemma \ref{stability} where the unitary operators $ U $ and $ V $ are perturbed by invertible operators of the type identity plus trace class. 
 
\begin{Lemma}
Given the invertible operators $ \rho $ and $ \eta $ such that $ \sigma( \rho \eta \rho^{-1} \eta^{-1} ) \bigcap \mathds{R^-} =\emptyset $, $ ( \rho - \I)(\eta-\I) \in \mathcal{I}_1 $, $ (\eta-\I)(\rho - \I) \in \mathcal{I}_1 $. Let us consider $ \rho $ and $ \eta $ of the following form $\rho=Ue^{K+iR} $ and $ \eta=Ve^{S+iT}$, with $ U $ and $ V $ unitary, $ K, R, S ,T $ self-adjoint trace class. $ U $ and $ V $ are such $ \B(U,V) $ is a well defined integer according to \ref{definition_bott}. It holds:
 \begin{equation}
  \B(\rho,\eta)=\B(U,V)
 \end{equation}
\end{Lemma}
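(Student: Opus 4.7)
The plan is to reduce the statement to the homotopic invariance established earlier, applied to a path of invertible operators. I would define the linear interpolation in the exponent,
$\rho(s) := U e^{(1-s)(K+iR)}$ and $\eta(s) := V e^{(1-s)(S+iT)}$ for $s \in [0,1]$, so that $(\rho(0), \eta(0)) = (\rho, \eta)$ and $(\rho(1), \eta(1)) = (U,V)$. Each $\rho(s), \eta(s)$ is invertible, and since $K+iR, S+iT \in \mathcal{I}_1$, the differences $\rho(s) - U = U(e^{(1-s)(K+iR)}-\I)$ and $\eta(s)-V$ lie in the trace class and depend trace-norm continuously on $s$, by the argument of appendix \ref{trace_class_pert}. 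Consequently $(\rho(s)-\I)(\eta(s)-\I)$ and $(\eta(s)-\I)(\rho(s)-\I)$ remain in $\mathcal{I}_1$ throughout the path, assuming the corresponding trace class conditions hold at the unitary endpoint (which they do by the hypothesis that $\B(U,V)$ is defined).

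Next I would differentiate $\B(\rho(s), \eta(s))$ following the manipulation in equations (10)--(14). The crucial input is that $\rho(s)^{-1} \partial_s \rho(s) = -(K+iR)$ and $\eta(s)^{-1}\partial_s \eta(s) = -(S+iT)$, because $K+iR$ commutes with $e^{(1-s)(K+iR)}$ and analogously on the $\eta$ side. Inserting this into the four-term cyclic expression produces
\[
 \partial_s \B(\rho(s),\eta(s)) = \frac{-1}{2\pi i} \T\bigl[\eta(s)^{-1}(K+iR)\eta(s) - (K+iR) - \rho(s)^{-1}(S+iT)\rho(s) + (S+iT)\bigr],
\]
and each of the two differences vanishes by the cyclic property of the trace, using that $K+iR$ and $S+iT$ are trace class while the conjugating operators $\rho(s)^{\pm 1}, \eta(s)^{\pm 1}$ are bounded.

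The main obstacle is establishing the spectrum condition $\sigma(\rho(s)\eta(s)\rho(s)^{-1}\eta(s)^{-1}) \cap \mathds{R^-} = \emptyset$ for \emph{every} $s$, not only at the endpoints, so that $\log$ remains well-defined and $\B(\rho(s),\eta(s))$ is continuous in $s$. Since $\rho(s)\eta(s)\rho(s)^{-1}\eta(s)^{-1} - \I$ is trace class and varies continuously in trace norm with $s$, the discrete spectrum moves continuously while the essential spectrum stays pinned at $\{1\}$. To close the argument I would either impose the spectrum condition as a hypothesis on the path, or partition $[0,1]$ into finitely many subintervals on each of which the eigenvalue flow stays within a sector avoiding $\mathds{R^-}$, applying the differential calculation above piecewise and using that the integer-valued $\B$ cannot jump along a homotopy on which the logarithm is continuously defined. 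This yields $\B(\rho,\eta) = \B(\rho(0), \eta(0)) = \B(\rho(1),\eta(1)) = \B(U,V)$.
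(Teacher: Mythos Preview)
Your proposal is correct and follows essentially the same approach as the paper: both interpolate linearly in the exponent via $\rho(a)=Ue^{a(K+iR)}$, $\eta(a)=Ve^{a(S+iT)}$, observe that the derivatives are trace class, and then invoke equation \eqref{homotopic} together with cyclicity of the trace to conclude $\partial_a\B=0$. You are in fact more careful than the paper on the spectrum condition along the path, which the paper's brief proof does not address.
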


\begin{proof}
 As noted above the homotopic invariance carries to the case of invertible operators. Therefore we consider the maps $ \rho(a)=Ue^{a(K+iR)} $ and $ \eta(a)=Ve^{a(S+iT)} $, with $ a \in [0,1] $. Then $ \partial_a \rho(a) $ and $ \partial_a \eta(a) $ are trace class implying, according to equation \eqref{homotopic}, that $ \partial_a \B(\rho(a),\eta(a))=0$. 
\end{proof}

\section{Application: integer quantum Hall effect} \label{2_dim_lattice}

In this section we consider an application of the general formalism just developed to the physics of quantum Hall systems. Let us consider a bounded self-adjoint $ H $ (a hamiltonian) acting on the Hilbert space of square integrable sequences on the 2-dimensional square lattice $ l_2(\mathbb{Z}^2) $. Let the hamiltonian $ H $ to represent an insulating phase of matter without translation invariance but with a spectral gap. It is well known that the transverse Hall conductance is quantized, due to the stability of the index of the underlining Fredholm operator, in strongly disordered systems without a spectral gap but with a mobility gap, namely when states at the Fermi energy are localized. This has been realized already in \cite{Bellissard_1994} see also \cite{Aizenman_Graf_1998, Elgart_Graf_Shenker, Prodan_Hughes_Bernevig_2010, Prodan_review_2011}. Nevertheless the requirement of a spectral gap makes the presentation straightforward. 

I briefly present the formalism of switch functions and state two crucial results that I freely use: given a local operator  
$ A $, $ [\Lambda_x, A ] $ is $x$-localized and $ [\Lambda_y, A ] $ is $y$-localized. The product of any bounded operator with a $x$-localized operator is $ x$-localized, similarly with a $y$-localized operator. The product of an $x$-localized operator with a $y$-localized operator is a trace class operator. The formalism of switch functions have been used for example in  \cite{Avron_Seiler_Simon_1994}. Very readable expositions are those of \cite{Graf_Tauber_2018, Shapiro_Tauber_2019}, in particular the last reference extends the formalism to quasi-local operators.

A basis of $ l_2(\mathbb{Z}^2) $ is the set 
\begin{equation}
 \delta_n(x) = 
 \left\{ \begin{array}{ll}
    1 \ \ \mbox{ if } x = n \in \mathbb{Z}^2 \\
    0 \ \ \mbox{ otherwise } 
  \end{array} \right.\ 
\end{equation}
The Hamiltonian $ H $ is assumed to be short range, this is common for tight binding models, namely $ \exists R > 0 $ finite, such that for every pair of points on the lattice $ n=(n_x,n_y) $, $ m=(m_x,m_y) $, if $ \textrm{dist}(n,m) > R $ then $ \langle \delta_m, H \delta_n \rangle  = 0 $. The assumption of $ H $ short range can be relaxed, and a local $ H $ can be considered, but again this choice has been made to make the presentation simpler. Let us define $ P $ the Fermi projection, $ P := \chi_{(-\infty, \, \mu)}(H) $, the chemical potential $ \mu $ is assumed to be within the spectral gap. Let us also define $ \Lambda_x(x) $ with $ x \in \mathds{R} $ as a switch function: there exist $ L_l < 0 $ and $ L_r > 0 $ such that:
\begin{equation}
 \Lambda_x(x) = 
 \left\{ \begin{array}{ll}
    1 \ \ \mbox{ if } x > L_r \\
    0 \ \ \mbox{ if } x < L_l 
  \end{array} \right.\ 
\end{equation}
with $ L_l \le x \le L_r $, $ \Lambda_x(x) $ is taken increasing but in general not continuous. Given the position operator $ X $, with the functional calculus we define the operator $ \Lambda_x(X) $, namely: $ \Lambda_x(X)\delta_{(n_x,n_y)} = \Lambda_x(n_x)\delta_{(n_x,n_y)} $.
% \begin{equation}
%  \Lambda_x(X)\delta_{(n_x,n_y)}  = 
%  \left\{ \begin{array}{ll}
%     \delta_{(n_x,n_y)} \ \ \mbox{ if } n_x > L_r \\
%     0 \ \ \mbox{ if } n_x < L_l 
%   \end{array} \right.\ 
% \end{equation}
$ \Lambda_y(y) $ is analogously defined. In what follows $ \Lambda_x(X) $ will be denoted $\Lambda_x$, similarly for $ \Lambda_y(Y)$,  denoted $\Lambda_y$.

We consider the two unitary operators $ e^{2\pi i P\Lambda_xP} $ and $ e^{2\pi i P\Lambda_yP} $ acting on $ l_2(\mathbb{Z}^2) $, $ P $ is the Fermi projection defined above. These operators where considered by Kitaev in the appendix C of \cite{Kitaev_2006}, and more recently in \cite{Shapiro_2020, Bols_2021}.

We need to verify that  $ e^{2\pi i P\Lambda_xP} $ and $ e^{2\pi i P\Lambda_yP} $ comply with the definition of the Bott index.
Let us show that with properly chosen switches $ \La_x $ and $ \La_y $, it holds $ \| \left[ e^{2\pi i P\Lambda_xP}, e^{2\pi i P\Lambda_yP} \right] \| < 2 $.

We start bounding $ \| [\La_x,H] \| $. The Holmgren bound for the norm of a bounded operator $ A $ is:
\begin{equation}
 \| A \| \le \max \left( \sup_{m\in \mathbb{Z}^2} \sum_{n \in \mathbb{Z}^2} | \langle \delta_m , A \delta_n \rangle |, m \leftrightarrow n \right)  
\end{equation}
A proof of this bound can be found in chapter 16 of \cite{Lax_2002}.
\begin{align}
  \| [\La_x,H] \| \le \max \left( \sup_{m\in \mathbb{Z}^2} \sum_{n \in \mathbb{Z}^2} | \langle \delta_m , [\La_x,H] \delta_n \rangle |, m \leftrightarrow n \right) 
\end{align}
We notice that 
\begin{equation}
 \langle \delta_m , [\La_x,H] \delta_n \rangle = \langle \delta_m , (\La_xH - H \La_x) \delta_n \rangle = (\La_x(m_x)-\La_x(n_x))\langle \delta_m , H \delta_n \rangle
\end{equation}
Therefore
\begin{align} \label{finite_range}
  \| [\La_x,H] \| \le \max \left( \sup_{m\in \mathbb{Z}^2} \sum_{\textrm{dist}(n,m)\le R }  |\La_x(m_x)-\La_x(n_x)| |\langle \delta_m ,H \delta_n \rangle |, m \leftrightarrow n \right) 
\end{align}
In eq. \eqref{finite_range} we took into account that given a fixed point $ m \in \mathbb{Z}^2 $ only the points of the lattice within the range $ R $ can contribute to $ \langle \delta_m ,H \delta_n \rangle $.

The variation of the switch $ \La_x $, $ |\La_x(m_x)-\La_x(n_x)| $ is either vanishing, when both $ m_x$ and $ n_x $ belong to a region where $ \Lambda_x $ is constant or can be made small for example choosing $ \La_x $ linearly increasing from $ 0 $ to $ 1 $ over a length $ L = L_r-L_l$ centered around $ x=0 $ with $ L \gg R $.  
This implies that 
\begin{equation}
\| [\La_x,H] \| \le \mathcal{O}\left(\frac{R}{L}\|H\| \right) 
\end{equation}
Using the contour integral representation of the Fermi projection $ P $ it is easy to see that:
\begin{equation}
\| [\La_x,P] \| \le \mathcal{O}\left(\frac{R}{L} \frac{\|H\|}{\Delta E} \right) 
\end{equation}
$ \Delta E $ is the energy gap where the chemical potential $ \mu $ is placed. $ L $ is chosen such that $ \left(\frac{R}{L} \frac{\|H\|}{\Delta E} \right) \ll 1 $.
The above bounds are the same as those established in the case of a finite dimensional Hilbert space by Hastings and Loring \cite{Hastings_Loring_2010,Hastings_Loring_2011,Loring_Hastings_2010}. In that case $ L $ was representing the diameter of the sphere (or the ``linear'' dimension of the torus) where the tight binding Hamiltonian was realized.
We can now bound $ \| \left[ e^{2\pi i P\Lambda_xP}, e^{2\pi i P\Lambda_yP} \right] \|  $.
\begin{align}
 & \| \left[ e^{2\pi i P\Lambda_xP}, e^{2\pi i P\Lambda_yP} \right] \| \\
 & = \|  e^{2\pi i P\Lambda_xP} e^{2\pi i P\Lambda_yP}e^{-2\pi i P\Lambda_xP} -  e^{2\pi i P\Lambda_yP} \| \\
 & = \| \int_0^1 e^{2\pi i t P\Lambda_xP}\left[2\pi i  P\Lambda_xP, e^{2\pi i P\Lambda_yP}\right]e^{-2\pi i t P\Lambda_xP} dt \| \\
 & \le \| [2\pi i  P\Lambda_xP, e^{2\pi i P\Lambda_yP}] \| \\
 & = \| \int_0^1 e^{-2\pi i s P\Lambda_yP}\left[2\pi i  P\Lambda_xP, 2\pi i P\Lambda_yP\right]e^{2\pi i s P\Lambda_yP} ds \| \\
 & \le \| \left[2\pi i  P\Lambda_xP, 2\pi i P\Lambda_yP\right] \| = 4 \pi^2 \| \left[ P\Lambda_xP, P\Lambda_yP\right] \|\\
 & \le 4 \pi^2 \left( \| \left[ \Lambda_x, P\right] \| + \| \left[ \Lambda_y, P\right] \| \right) \ll 1 \label{final_bound}
\end{align}
% There are suggestions that a different type 
% of switch functions could have been employed, still ensuring $ \| \left[ e^{2\pi i P\Lambda_xP}, e^{2\pi i P\Lambda_yP} \right] \| < 2 $. Let us consider  $ \La_x(x) = \La_x^2(x) $ and  $ \La_y(y) = \La_y^2(y) $ this implies that the operators $ \La_x $ and $ \La_y $ are projections. 
I suggest and sketch a different approach to  bound $ \| \left[ e^{2\pi i P\Lambda_xP}, e^{2\pi i P\Lambda_yP} \right] \| $.
Consider $ f $ and $ g $ such that $ k\hat f \, \textrm{and} \, k\hat g \in L^1(\mathds{R}) $ then it holds:  $ \| [f(A),g(B)] \| \le \|k\hat f\|_1 \|k\hat g\|_1 \|[ A,B ] \| $, a proof can be inferred by theorem 5.8.8 of the book \cite{Simon_Operator_Theory}.
%a sketch of the proof of this bound is provided in appendix \ref{appendix_bound}. 
Let us look at the operator $ e^{2 \pi i P \La_x P } $, it holds $ \|P \La_x P\| \le 1 $. We can imagine to extend the function $ e^{2 \pi i x } $ on the interval $ (-\infty , -1) \cup (1, \infty) $ to a function $ f $, a Gaussian for example, with smooth enough junctions; in doing so the operator $ e^{2\pi i P\Lambda_xP} $ is unaffected. We do the same for the $y$-partner. Since
\begin{equation}
 \| [ P\Lambda_xP, P\Lambda_yP] \| = \| [ P\left(\Lambda_x-\frac{1}{2}\right)P, P\left(\Lambda_y-\frac{1}{2}\right)P] \| \le \frac{1}{2}
\end{equation}
to ensure $ \| \left[ e^{2\pi i P\Lambda_xP}, e^{2\pi i P\Lambda_yP} \right] \| < 2$,  we need that the function $ f $ described above satisfies $ (\|k\hat f\|_1)^2 < 4 $. 

This approach would also allow to avoid the explicit use of the spectral gap $ \Delta E $ opening up to the possibility of working in a mobility gap regime where the Fermi projection is defined through the Borel functional calculus rather than the continuous one. This approach would moreover extend the class of the Hamiltonian from finite range to local, and make the formulation of the Bott index, given in theorem \ref{Bott_physics} below, completely independent from the shape of the switch functions.
I do not have a clean proof of $ (\|k\hat f\|_1)^2 < 4 $ but only some numerical hints.

We now move on to discuss the trace class properties of $ \left( e^{2\pi i P\Lambda_xP} -\I \right) \left( e^{2\pi i P\Lambda_yP} -\I \right) $. A result of \cite{Shapiro_2020}, proposition 4.4, states that if a local operator $ A $ is such that $ A^2 - A $ or $ A^2 + A $ is $x$-confined then $ \left( e^{-2\pi i A } -\I \right) $ is $x$-confined.  To prove that $ \left( e^{2\pi i P\Lambda_xP} -\I \right) $ is $x$-confined, we simply verify that $ (P\Lambda_xP)^2 - P\Lambda_xP $ is $x$-confined. Denoting $ \PP := \I -P $, we have:
\begin{align}
 & (P\Lambda_xP)^2 - P\Lambda_xP = P\Lambda_xP\Lambda_xP - P\Lambda_xP = P\Lambda_x(\I -\PP)\Lambda_xP - P\Lambda_xP \\
 &= P(\La_x^2 -\La_x)P -P\Lambda_x\PP\Lambda_xP = P(\La_x^2 -\La_x)P - [P,\Lambda_x]\PP[\Lambda_x,P] 
\end{align} 
We observe that the function $ \La_x(x)^2 -\La_x(x)$ is compactly supported, namely vanishing for $ |x| \ge \max\{L_r,|L_l|\} $, this implies, with a slight abuse of language, that $ P(\La_x^2 -\La_x)P $ is $x$-confined. $ [P,\Lambda_x] $ is $x$-confined according to what stated above.

Repeating the same argument for the $y$-case we have that $ \left( e^{2\pi i P\Lambda_yP} -\I \right) $ is $y$-confined, making $ \left( e^{2\pi i P\Lambda_xP} -\I \right) \left( e^{2\pi i P\Lambda_yP} -\I \right) $ and $ \left( e^{2\pi i P\Lambda_yP} -\I \right) \left( e^{2\pi i P\Lambda_xP} -\I \right) $ both trace class.

\begin{Theorem} \label{Bott_physics}
Given a range-$ R $ Hamiltonian, $ H $, on $ l_2(\mathbb{Z}^2) $ with a spectral gap, denoting $ P=\chi_{(-\infty,\mu)}(H) $, with $ \mu $ in the spectral gap, the Fermi projection, and $ \La_x(x) $, $ \La_y(y) $ two switch functions increasing from zero to one over a length $ L \gg R $, it follows that
\begin{equation}
 \B(e^{2\pi i P\Lambda_xP}, e^{2\pi i P\Lambda_yP}) := \frac{1}{2\pi i} \textrm{Tr} \log \left(e^{2\pi i P\Lambda_xP} e^{2\pi i P\Lambda_yP} e^{-2\pi i P\Lambda_xP} e^{-2\pi i P\Lambda_yP} \right)
\end{equation}
is a well defined integer.
\end{Theorem}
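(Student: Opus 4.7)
The plan is to verify the three hypotheses of Definition \ref{definition_bott} for the pair $U := e^{2\pi \i P\La_x P}$, $V := e^{2\pi \i P\La_y P}$; once these are in place, integrality of $\B(U,V)$ follows automatically from the lemma stated after Definition \ref{definition_bott}. The three hypotheses are the norm bound $\|[U,V]\|<2$ and the two trace class conditions $(U-\I)(V-\I)\in\mathcal{I}^1$ and $(V-\I)(U-\I)\in\mathcal{I}^1$. Since essentially every ingredient has been developed in the discussion preceding the theorem, the proof is an assembly of those pieces.

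For the norm bound I would combine the Holmgren estimate with the short-range hypothesis to get $\|[\La_x,H]\| \le \mathcal{O}(R\|H\|/L)$, and then use the Cauchy contour integral representation of the Fermi projection inside the spectral gap to upgrade this to $\|[\La_x,P]\| \le \mathcal{O}(R\|H\|/(L\Delta E))$, with the same bound for the $y$-partner. The nested Duhamel expansion culminating in \eqref{final_bound} reduces $\|[U,V]\|$ to $4\pi^2\bigl(\|[\La_x,P]\|+\|[\La_y,P]\|\bigr)$, which is made arbitrarily small, in particular strictly below $2$, by choosing $L \gg R$.

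For the trace class conditions I would follow the route of the paragraph preceding the theorem. By Proposition 4.4 of \cite{Shapiro_2020}, to show $e^{2\pi \i P\La_x P}-\I$ is $x$-confined it suffices that $(P\La_x P)^2 - P\La_x P$ is $x$-confined; the algebraic identity
\begin{equation}
(P\La_x P)^2 - P\La_x P = P(\La_x^2-\La_x)P - [P,\La_x]\PP[\La_x,P]
\end{equation}
exhibits this, since $\La_x^2-\La_x$ is compactly supported in the switching window and $[\La_x,P]$ is $x$-localized by locality of $P$. The analogous argument gives $y$-confinement of $e^{2\pi \i P\La_y P}-\I$. The product of an $x$-confined with a $y$-confined operator is trace class, which yields both required conditions (in either order).

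The main obstacle is the commutator bound: each Duhamel step in the reduction of $\|[U,V]\|$ to $\|[P\La_xP,P\La_yP]\|$ and finally to $\|[\La_x,P]\|+\|[\La_y,P]\|$ relies on unitarity and on carefully tracking that no hidden multiplicative constant appears beyond the $4\pi^2$. Once this chain is secured and $L$ is chosen with $L\gg R$, all three hypotheses of Definition \ref{definition_bott} hold and the lemma that follows it delivers $\B(U,V)\in\mathds{Z}$.
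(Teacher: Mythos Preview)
Your proposal is correct and follows essentially the same approach as the paper: the paper's proof is literally ``Given above,'' and the preceding discussion in Section~\ref{2_dim_lattice} consists precisely of the Holmgren/contour/Duhamel chain for the norm bound culminating in \eqref{final_bound}, together with the $(P\La_xP)^2-P\La_xP$ identity and Proposition~4.4 of \cite{Shapiro_2020} for the confinement/trace-class conditions---exactly the pieces you assemble. There is no substantive difference.
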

\begin{proof}
 Given above.
\end{proof}

%--------------------------------------------------------
%                               CHERN NUMBER
%--------------------------------------------------------

\section{Equivalence with the Chern number} \label{Chern_number}

We will now show that $ \B(e^{2\pi i P\Lambda_xP}, e^{2\pi i P\Lambda_yP}) $ equals the transverse Hall conductance $ \sigma_{\textrm{Hall}}=2\pi i \T \left[P\La_xP,P\La_yP\right] $ that has been proven to be equal the index of a Fredholm operator \cite{Bellissard_1994} and the index of a pair of projections \cite{Avron_Seiler_Simon_1994}. 

\begin{Theorem}
 Given $ P, \, \La_x,\, \La_y $ defined as in theorem \ref{Bott_physics}, it holds:
 \begin{equation} \label{Bott_Chern}
 \B(e^{2\pi i P\Lambda_xP},e^{2\pi i P\Lambda_yP})=2\pi i \T[P\Lambda_xP,P\Lambda_yP]
\end{equation}
\end{Theorem}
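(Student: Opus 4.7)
The plan is to reduce the equality \eqref{Bott_Chern} to a direct derivative-and-integrate computation along a homotopy connecting $(U,V) = (e^{2\pi i P\Lambda_x P}, e^{2\pi i P\Lambda_y P})$ to $(\I, \I)$. Concretely, I would set $U(s) := e^{2\pi i s P\Lambda_x P}$ and $V(s) := e^{2\pi i s P\Lambda_y P}$ for $s \in [0,1]$, so that $U(0) = V(0) = \I$ and hence $\B(U(0), V(0)) = 0$. This is precisely a homotopy of the non-trace-class type flagged at the end of Section \ref{homotopy}: $U(s) - \I$ is only $x$-confined (not trace class) and $V(s)-\I$ only $y$-confined, so the Bott index is allowed to vary along the path, and the claim is that its total change is $2\pi i\,\T[P\Lambda_x P, P\Lambda_y P]$.

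First I would verify that the hypotheses of Definition \ref{definition_bott} hold for every $s \in [0,1]$. The bound $\|[U(s), V(s)]\| < 2$ follows from the same chain of inequalities leading to \eqref{final_bound} with $P\Lambda_x P, P\Lambda_y P$ replaced by their $s$-multiples (which only tightens the bound), and the trace-class of $(U(s)-\I)(V(s)-\I)$ and $(V(s)-\I)(U(s)-\I)$ follows verbatim from the $x$/$y$-confinement analysis of Section \ref{2_dim_lattice}. In addition one needs $[P\Lambda_x P, P\Lambda_y P] \in \mathcal{I}_1$: expanding the commutator using $\Lambda_x P = P\Lambda_x + [\Lambda_x, P]$ and $[\Lambda_x, \Lambda_y] = 0$, each surviving term contains one $x$-confined factor built from $[\Lambda_x, P]$ and one $y$-confined factor built from $[\Lambda_y, P]$, so the product is trace class by the rule quoted in Section \ref{2_dim_lattice}.

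Next I would apply formula \eqref{homotopic} pointwise in $s$. Because $U(s)$ commutes with its generator $P\Lambda_x P$ and $V(s)$ with $P\Lambda_y P$, one has $U^*(s)\partial_s U(s) = 2\pi i P\Lambda_x P$ and $V^*(s)\partial_s V(s) = 2\pi i P\Lambda_y P$; the four terms of \eqref{homotopic} then collapse to
\begin{equation*}
 \partial_s \B(U(s), V(s)) = \T\bigl(V^*(s) P\Lambda_x P\, V(s) - P\Lambda_x P\bigr) + \T\bigl(P\Lambda_y P - U^*(s) P\Lambda_y P\, U(s)\bigr).
\end{equation*}
Rewriting each bracket as $V^*[P\Lambda_x P, V]$ respectively $-U^*[P\Lambda_y P, U]$ and invoking the Duhamel identity
\begin{equation*}
 [P\Lambda_x P, V(s)] = 2\pi i s \int_0^1 V(s)^{t}\,[P\Lambda_x P, P\Lambda_y P]\,V(s)^{1-t}\, dt
\end{equation*}
together with its $U$-analogue, cyclicity of the trace --- legitimate because $[P\Lambda_x P, P\Lambda_y P] \in \mathcal{I}_1$ and the sandwiching operators $V(s)^{1-t}$, $U(s)^{1-t}$ are unitary --- evaluates each bracket to $2\pi i s\,\T[P\Lambda_x P, P\Lambda_y P]$. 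Hence $\partial_s \B(U(s),V(s)) = 4\pi i s\, \T[P\Lambda_x P, P\Lambda_y P]$, and integrating from $0$ to $1$ against $\B(U(0),V(0)) = 0$ produces $\B(U(1), V(1)) = 2\pi i\, \T[P\Lambda_x P, P\Lambda_y P]$, which is the claim.

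The main obstacle is bookkeeping the trace-class and cyclicity steps cleanly: in particular establishing $[P\Lambda_x P, P\Lambda_y P] \in \mathcal{I}_1$, justifying the differentiability in $s$ of $\T\log(U(s)V(s)U^*(s)V^*(s))$, and identifying its derivative with expression \eqref{homotopic} along a path that has left the identity-plus-trace-class regime in which Theorem \ref{homotopy} was originally stated. All three reduce to careful application of the $x$/$y$-confinement calculus of Section \ref{2_dim_lattice}, but must be verified explicitly because the individual operators $U(s)-\I$ and $V(s)-\I$ are not themselves trace class, so only carefully chosen combinations are.
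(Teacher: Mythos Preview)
Your proposal is correct and follows essentially the same route as the paper: both run the homotopy $s\mapsto(e^{sA},e^{sB})$ with $A=2\pi i P\Lambda_xP$, $B=2\pi i P\Lambda_yP$, differentiate $\T\log(e^{sA}e^{sB}e^{-sA}e^{-sB})$, reduce via Duhamel and trace cyclicity to a derivative linear in $s$, and integrate. The only cosmetic difference is that you invoke formula \eqref{homotopic} while the paper recomputes that derivative from scratch and cites Ehrhardt's result on $e^{tC}e^{tD}e^{-t(C+D)}-\I$ being entire in $\mathcal{I}_1$ to dispatch the differentiability and trace-class bookkeeping you flag at the end.
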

\begin{proof}
The following is inspired by an analogous method used to prove the Helton-Howe-Pincus formula.

It has been proven in \cite{Ehrhardt}, see also \cite{Analysis_Toeplitz}, that with $ C, \,D\in \mathcal{L}(\mathcal{H})$ and $ [C,D] \in \mathcal{I}_1 $, $ e^{tC}e^{tD}e^{-t(C+D)}-\I$ is an entire $\mathcal{I}_1 $-valued function. This implies that with $ A:=2\pi i P\Lambda_xP$ and $ B:=2\pi i P\Lambda_yP $, $ e^{tA}e^{tB}e^{-tA}e^{-tB}-\I$ is an entire $\mathcal{I}_1 $-valued function. Moreover equation \eqref{final_bound} implies that $ \forall t \in [0,1] $, $ \|[e^{tA},e^{tB}] \| <2 $. 

Since with $ t \neq 0 $, $ e^{tA} $ is not of the form $ \I + \textrm{trace class} $, then $ \B(e^{tA},e^{tB}) $ will not in general be invariant along the unitary paths $ (e^{tA},e^{tB}) $.  

Let us define $ g(t):= \T \log \left(e^{tA}e^{tB}e^{-tA}e^{-tB} \right)$.
We now map the problem of determining $ \B(e^{A},e^{B})$ into the boundary value of the solution of a differential equation.
\begin{align}
 \frac{dg}{dt}&=\T [\left(Ae^{tA}e^{tB}e^{-tA}e^{-tB}+e^{tA}Be^{tB}e^{-tA}e^{-tB}-e^{tA}e^{tB}Ae^{-tA}e^{-tB}-e^{tA}e^{tB}e^{-tA}Be^{-tB}\right)\nonumber \\ &\hspace{10mm} e^{tB}e^{tA}e^{-tB}e^{-tA} ] \\
 &=\T \left(A+e^{tA}Be^{-tA}-e^{tA}e^{tB}Ae^{-tB}e^{-tA}-e^{tA}e^{tB}e^{-tA}Be^{tA}e^{-tB}e^{-tA}\right) \\
 &=\T \left(A+B-e^{tB}Ae^{-tB}-e^{tB}e^{-tA}Be^{tA}e^{-tB}\right) \\
 &=\T \left(e^{-tB}Ae^{tB}+B-A-e^{-tA}Be^{tA}\right) \\
 &=\T \left(e^{-tB}Ae^{tB} -A \right) + \T \left( B-e^{-tA}Be^{tA} \right) \\
 &=\T \left( \int_0^1 ds e^{-stB}[-tB,A]e^{stB} \right) + \T \left( \int_0^1 ds e^{-stA}[B,-tA]e^{stA} \right) \\
 &=\T \left( [-tB,A] \right) + \T \left( [B,-tA] \right) = 2t \T \left( [A,B] \right) 
\end{align}
Observing that $ g(0)=0$
\begin{align}
 g(t)=\int_0^t g'(s)ds=\int_0^t 2s\T[A,B] ds = t^2\T[A,B] 
\end{align}
This implies
\begin{equation}
 \B(e^{A},e^{B})=\frac{1}{2\pi i}g(1)=\frac{1}{2\pi i} \T[A,B]
\end{equation}
Then 
\begin{equation}
 \B(e^{2\pi i P\Lambda_xP},e^{2\pi i P\Lambda_yP})=2\pi i \T[P\Lambda_xP,P\Lambda_yP].
\end{equation}
\end{proof}

Appendix C of Kitaev's paper \cite{Kitaev_2006} contains a formal proof of the quantization of the transverse Hall conductance, this was a motivation for the authors of \cite{Elgart_Fraas_2021} to identify the trace class properties of the invertible operators $ A $ and $ B $ that ensure $ \det(ABA^{-1}B^{-1}) = 1 $. Applying lemma 2.4 of \cite{Elgart_Fraas_2021} together with the Helton-Howe-Pincus formula \cite{Ehrhardt,Analysis_Toeplitz} we obtain that:
\begin{equation}
 1 = \det \left(e^{2\pi i P\Lambda_xP} e^{2\pi i P\Lambda_yP} e^{-2\pi i P\Lambda_xP} e^{-2\pi i P\Lambda_yP} \right) = e^{ \T[2\pi i P\Lambda_xP,2 \pi i P\Lambda_yP]}
\end{equation}
That implies: $ 2\pi i \T[P\Lambda_xP,P\Lambda_yP] \in \mathbb{Z} $.

%--------------------------------------------------------
%                               DISCUSSION
%--------------------------------------------------------

\section{Perspectives}
One question that arises naturally is to show a direct connection among the Bott index and the Fredholm index, namely: given two unitary (or invertible) operators $ U,V$, which is the function $ f(U,V) $ such that $ \B(U,V)=\textrm{Index}(f(U,V)) $?
This question has a general interest because if answered affirmatively it would make possible to map the properties of the Fredholm index back to the Bott index. 

Another perspective that this work opens is related to the spectral localizer. We have seen in section \ref{Chern_number} that the Bott index allows to ``bring down'' the operators that appear at the exponents in the unitary operators appearing on the LHS of eq. \eqref{Bott_Chern} in a flexible way. If via homotopic equivalence it would be possible to get the Hamiltonian $ H $ at the exponent and then ``bring it down'' we would obtain an index that start to look like the spectral localizer, moreover the length scale $ L $ associated to the switch functions might allow to effectively restrict the trace on a square $ L \times L $.

\section{Acknowledgements} 

I acknowledge financial support by the UK’s Engineering and Physical Sciences Research Council (grant number EP/R012393/1 Masanes)

\section*{Appendices}
\appendix

%---------------------------------------------------------------------------
%                                              LOG TRACE NORM
%---------------------------------------------------------------------------

\section{Bounding the trace norm of the logarithm} \label{appendix_C_constant}

The proof of the following lemma follows  chapter 8 of Yafaev's book \cite{Yafaev_Mathematical_General}. 
\begin{Lemma} \label{log}  
 Given $ S $ and $ T $ two unitary operators with spectrum not containing $ \{-1\} $, and such that $ \| S-T \|_1 < \infty $. Fixing the branch cut of the logarithm on the real negative axis, it follows that:
 \begin{equation}
 \frac{1}{2 \pi} \| \log S - \log T \|_1 \le C \|S-T\|_1
 \end{equation}
 The constant $ C $ depends on the shape of the loop $ \Gamma $, see the proof.
 \begin{equation} \label{def_C}
  C =  \sup_{w\in \Gamma} \left( \| (w\I-S)^{-1} \| \|(w\I-T)^{-1}\| \right) \frac{1}{4\pi^2} \int_{\Gamma} |\log z| |dz| 
 \end{equation}
 \end{Lemma}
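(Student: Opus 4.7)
The plan is to follow the standard Dunford / holomorphic functional calculus route, which is the natural setup since the statement of the lemma explicitly points to a contour $\Gamma$. I would begin by choosing a rectifiable closed contour $\Gamma \subset \mathbb{C} \setminus \mathbb{R}^{-}$ that encloses both $\sigma(S)$ and $\sigma(T)$ but stays away from the branch cut on the negative real axis. Since $S$ and $T$ are unitary with $-1 \notin \sigma(S) \cup \sigma(T)$, both spectra are compact subsets of the unit circle minus a neighborhood of $-1$, so such a $\Gamma$ exists (for instance, two small arcs hugging the upper and lower pieces of the unit circle and closed off on the right, staying uniformly bounded away from both spectra and from $\mathbb{R}^{-}$). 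On this contour $\log z$ is holomorphic, and the Dunford integral gives
\begin{equation}
\log S \;=\; \frac{1}{2\pi i}\oint_{\Gamma} \log(z)\,(z\I - S)^{-1}\, dz, \qquad \log T \;=\; \frac{1}{2\pi i}\oint_{\Gamma} \log(z)\,(z\I - T)^{-1}\, dz.
\end{equation}

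Next I would subtract and apply the second resolvent identity
\begin{equation}
(z\I - S)^{-1} - (z\I - T)^{-1} \;=\; (z\I - S)^{-1}(S - T)(z\I - T)^{-1},
\end{equation}
to obtain the representation
\begin{equation}
\log S - \log T \;=\; \frac{1}{2\pi i}\oint_{\Gamma} \log(z)\,(z\I - S)^{-1}(S-T)(z\I - T)^{-1}\, dz.
\end{equation}
At this point $S - T \in \mathcal{I}^1$ sits inside the integrand, sandwiched between two bounded operators whose norms are controlled uniformly on $\Gamma$. Using the ideal property $\|ABC\|_1 \le \|A\|\,\|B\|_1\,\|C\|$ together with the triangle inequality for Bochner-type integrals in the trace norm, I would estimate
\begin{equation}
\|\log S - \log T\|_1 \;\le\; \frac{1}{2\pi}\oint_{\Gamma} |\log z|\; \|(z\I-S)^{-1}\|\;\|(z\I-T)^{-1}\|\;\|S-T\|_1\, |dz|,
\end{equation}
and then pull the $\sup$ over $\Gamma$ of the two resolvent norms out of the integral. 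Dividing by $2\pi$ on the left produces exactly the constant $C$ in \eqref{def_C}.

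The main technical point I would be careful about, rather than a genuine obstacle, is justifying that the Bochner integral makes sense in the trace norm: this needs the integrand $z \mapsto \log(z)(z\I-S)^{-1}(S-T)(z\I-T)^{-1}$ to be continuous as a map from $\Gamma$ into $\mathcal{I}^1$. Continuity of the resolvents on $\Gamma$ in operator norm is standard (they are holomorphic away from the spectrum), $S-T$ is a fixed trace class operator, and the ideal property upgrades operator-norm continuity of the flanking factors to trace-norm continuity of the product; compactness of $\Gamma$ then gives integrability. The other small point is that the bound is vacuous unless $\Gamma$ can be chosen with a finite value of $\sup_{w\in\Gamma}\|(w\I-S)^{-1}\|\|(w\I-T)^{-1}\|$, which is automatic once $\Gamma$ is taken at a positive distance from $\sigma(S) \cup \sigma(T)$, itself possible because the two spectra avoid both $-1$ and $\mathbb{R}^{-}$. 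With these points in hand the proof reduces to the display above.
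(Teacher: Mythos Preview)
Your proposal is correct and follows exactly the same route as the paper: Dunford integral representation of $\log S$ and $\log T$, second resolvent identity, then the trace-norm ideal bound with the sup of the resolvent norms pulled outside the contour integral. Your write-up is in fact more careful than the paper's about the choice of $\Gamma$ and the trace-norm Bochner integrability, but the argument is the same.
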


\begin{proof}
 We write $ \log S - \log T $ using the contour integral representation with $ \Gamma $ enclosing the spectrum of $ S $ and $ T $ but not the origin of the complex plane.
\begin{align}
 \log S - \log T & = \frac{1}{2\pi i} \int_{\Gamma} \log z \left( (z\I-S)^{-1} - (z\I-T)^{-1} \right) dz \\
 & = \frac{1}{2\pi i} \int_{\Gamma} \log z  (z\I-S)^{-1}(T-S) (z\I-T)^{-1}  dz 
\end{align}
\begin{align}
\| \log S - \log T \|_1 & \le   \frac{1}{2\pi} \int_{\Gamma} |\log z|  \| (z\I-S)^{-1} \| \|(T-S)\|_1 \|(z-T)^{-1}\|  |dz| \\
 & \le \|(T-S)\|_1 \sup_{w\in \Gamma} \left( \| (w\I-S)^{-1} \| \|(w\I-T)^{-1}\| \right) \frac{1}{2\pi}  \int_{\Gamma} |\log z| |dz| 
\end{align}
\end{proof}

%-------------------------------------------------------------------------------------
%                                   VANISHING OF THE BOTT INDEX
%-------------------------------------------------------------------------------------

\section{Vanishing of the Bott Index} \label{vanishing_index}

A sufficient condition for the vanishing of the Bott index is given. 
\begin{Lemma}
 The inequality 
\begin{equation} \label{vanish}
 |\textrm{Bott}(U,V)| \le \frac{1}{2\pi} \| \log(UVU^*V^*) \|_1 \le \frac{1}{4} \|[U,V] \|_1
\end{equation}
implies that if $ \|[U,V] \|_1 < 4 $ then $ \textrm{Bott}(U,V)=0 $.
\end{Lemma}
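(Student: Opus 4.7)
The plan is to obtain Bott$(U,V)=0$ as an immediate corollary of the displayed inequality by exploiting integrality of the Bott index, which was established in the lemma following Definition~\ref{definition_bott}. No analytic work beyond that integrality result is required, since the two displayed inequalities are part of the hypothesis of the present lemma.

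Concretely, substituting the assumption $\|[U,V]\|_1<4$ into the right-hand bound yields
$$|\textrm{Bott}(U,V)| \le \tfrac{1}{4}\|[U,V]\|_1 < 1.$$
Because $\textrm{Bott}(U,V)\in \mathbb{Z}$ by the integrality lemma, the open interval $(-1,1)$ contains only the integer $0$, and therefore $\textrm{Bott}(U,V)=0$. This is a one-line pigeonhole argument relying purely on discreteness of $\mathbb{Z}$.

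There is no serious obstacle in the deductive step itself. The only point that merits a brief check at the outset is that Bott$(U,V)$ is well-defined and integer-valued in the first place, i.e.\ that the operator-norm bound $\|[U,V]\|<2$ and the trace-class conditions $(U-\I)(V-\I),\,(V-\I)(U-\I)\in\mathcal{I}^1$ from Definition~\ref{definition_bott} are in force. Note that the trace-norm hypothesis $\|[U,V]\|_1<4$ does not by itself give $\|[U,V]\|<2$ (only $\|[U,V]\|<4$), so these are genuine side-assumptions implicit in the context of the lemma, and should be carried along from the setting in which the Bott index is defined.
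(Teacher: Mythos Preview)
You have misread the logical structure of the lemma. The chain of inequalities \eqref{vanish} is not a hypothesis to be assumed; it is the substantive claim being asserted, and the sentence ``implies that if $\|[U,V]\|_1<4$ then $\textrm{Bott}(U,V)=0$'' is the (trivial) corollary drawn from it together with integrality. Your write-up supplies only that trivial last step and explicitly declines to prove the inequalities, so the actual content of the lemma is missing.

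The paper's proof establishes \eqref{vanish} as follows. Writing the eigenvalues of the unitary $UVU^*V^*$ as $e^{i\theta_j}$ with $\theta_j\in(-\pi,\pi)$, one has
\[
\frac{1}{2\pi}\|\log(UVU^*V^*)\|_1=\frac{1}{2\pi}\sum_j|\theta_j|,
\]
since for a normal compact operator the singular values are the moduli of the eigenvalues. The elementary bound $|\theta|\le \tfrac{\pi}{2}\,|e^{i\theta}-1|$ for $\theta\in(-\pi,\pi)$ then gives
\[
\frac{1}{2\pi}\sum_j|\theta_j|\le \frac{1}{4}\sum_j|e^{i\theta_j}-1|=\frac{1}{4}\|UVU^*V^*-\I\|_1\le\frac{1}{4}\|[U,V]\|_1.
\]
This is the work the lemma requires; your integrality-plus-pigeonhole observation is correct but is exactly the one sentence the paper also spends on it.
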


\begin{proof}
Since $ \textrm{Bott}(U,V) $ is an integer, if its modulus  has an upper bound strictly smaller than $ 1 $ then it vanishes. 
 Let us consider the set of eigenvalues of $ UVU^*V^* $, $ \{ e^{i\theta_j} \}$ with $ \theta_j \in (-\pi, \pi) $, it follows that 
 \begin{align}
 &\frac{1}{2\pi} \| \log(UVU^*V^*) \|_1 = \frac{1}{2\pi} \sum_j |\theta_j|  \label{4} \\
 &\le \frac{1}{4} \sum_j |e^{i\theta_j}-1| = \frac{1}{4} \| UVU^*V^* - \mathds{1} \|_1 = \frac{1}{4} \| (UV - VU) U^*V^* \|_1 \le \frac{1}{4} \| [U,V] \|_1 \label{5}
 \end{align}
The singular values of a normal compact operator are the modulus of the eigenvalues; this is used in \eqref{4}. The inequality $ |\theta| \le \frac{\pi}{2} |e^{i\theta}-1| $, $ \theta \in (-\pi,\pi) $, has been used in \eqref{5}.
\end{proof}

%-------------------------------------------------------------------------------
%                                     IDENTITY + TRACE CLASS
%-------------------------------------------------------------------------------

\section{Unitary operators that differ by a trace class} \label{trace_class_pert}

\begin{Lemma}
 $ U $ and $ V $ are unitary operators on a Hilbert space $ \mathcal{H} $, it is $ U-V \in \mathcal{I}_1 $ if and only if it exists $ A $ self-adjoint trace class such that $ V=e^{iA}U $.
\end{Lemma}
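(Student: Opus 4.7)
The plan is to prove the two implications separately. For the $(\Leftarrow)$ direction, assuming $V = e^{iA}U$ with self-adjoint $A \in \mathcal{I}_1$, I would write $V - U = (e^{iA} - \I)U$ and expand $e^{iA} - \I$ as the power series $\sum_{n\ge 1} (iA)^n/n!$. This converges in trace norm because $\|A^n\|_1 \le \|A\|_1 \|A\|^{n-1}$ and $\|A\| \le \|A\|_1$; multiplication by the bounded operator $U$ on the right then yields $V - U \in \mathcal{I}_1$.

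For the $(\Rightarrow)$ direction, the first move I would make is to reduce to the case $U = \I$ by setting $W := V U^*$, which is unitary and satisfies $W - \I = (V-U)U^* \in \mathcal{I}_1$. The problem becomes: any unitary $W$ with $W - \I$ trace class admits a representation $W = e^{iA}$ with self-adjoint $A \in \mathcal{I}_1$. Since $W$ is unitary (hence normal) and $W - \I$ is compact, the spectrum of $W$ is pure point away from $1$, consisting of eigenvalues $\{\mu_j\}$ on the unit circle with $1$ as the only possible accumulation point, each $\mu_j \neq 1$ of finite multiplicity $n_j$. I would then define $A := \theta(W)$ through the Borel functional calculus, where $\theta : S^1 \to [-\pi, \pi]$ is the argument function with $\theta(e^{i\phi}) = \phi$ for $\phi \in (-\pi, \pi]$ and $\theta(1) = 0$. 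This makes $A$ self-adjoint and gives $e^{iA} = W$ at once from $e^{i\theta(z)} = z$.

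Showing $A \in \mathcal{I}_1$ is the step that requires some work. The key observation I would exploit is that $W - \I$ is itself normal: a direct computation using $W^*W = \I$ gives $(W - \I)^*(W - \I) = 2\I - W - W^* = (W - \I)(W - \I)^*$. Hence, by the spectral theorem for compact normal operators, $\|W - \I\|_1 = \sum_j n_j|\mu_j - 1|$. Meanwhile $A$ is compact self-adjoint with eigenvalues $\theta_j := \theta(\mu_j)$ of multiplicity $n_j$, so $\|A\|_1 = \sum_j n_j |\theta_j|$. Applying the elementary inequality $|\theta| \le \tfrac{\pi}{2} |e^{i\theta} - 1|$ valid for $\theta \in [-\pi, \pi]$ (since $|e^{i\theta} - 1| = 2|\sin(\theta/2)| \ge 2|\theta|/\pi$) term by term yields $\|A\|_1 \le \tfrac{\pi}{2} \|W - \I\|_1 < \infty$.

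The main technical obstacle I anticipate concerns the discontinuity of the argument function at $-1$: if $-1 \in \sigma(W)$, the continuous (principal) branch of the logarithm cannot be used there, which is why I would rely on the Borel rather than the continuous functional calculus, with the convention $\theta(-1) = \pi$. This causes no trouble for the final trace-norm estimate, since the inequality $|\theta| \le \tfrac{\pi}{2}|e^{i\theta}-1|$ is saturated at $\theta = \pm\pi$.
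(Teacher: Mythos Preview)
Your proof is correct and follows essentially the same route as the paper: reduce to $W=VU^*$, use that $W-\I\in\mathcal{I}_1$ forces the spectrum of $W$ to be discrete away from $1$, define $A$ via the argument of the eigenvalues, and bound $\|A\|_1$ by $\tfrac{\pi}{2}\|W-\I\|_1$ using $|\theta|\le\tfrac{\pi}{2}|e^{i\theta}-1|$. The only noteworthy difference is that you explicitly handle the possibility $-1\in\sigma(W)$ via the Borel functional calculus with $\theta(-1)=\pi$, whereas the paper tacitly takes $\theta_m\in(-\pi,\pi)$; your treatment is the more complete one on this point.
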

\begin{proof}
 If $ A $ self-adjoint trace class and $ V=e^{iA}U $ then $ U-V =(\I-e^{iA})U = A \cdot \textrm{Bounded} \in \mathcal{I}_1 $.
If $ U-V \in \mathcal{I}_1 $ then, being $ U-V = (\I-VU^{-1})U $, $ (\I-VU^{-1}) \in \mathcal{I}_1 $. This implies that if there is  essential spectrum of $ (\I-VU^{-1}) $ then it is in $ \{0\} $, therefore the only point where the discrete spectrum can accumulate is $ \{0\} $. Then $ \I-VU^{-1} = \sum_m(1-e^{i\theta_m})P_m $. $ P_m $ is the spectral projection associated to the eigenvalue  $ e^{i\theta_m} $, $ \forall m \hspace{2mm}\textrm{dim} P_m < \infty $, $ \theta_m \in (-\pi,\pi) $.
It is easy to verify that 
\begin{equation} \label{exp_pro}
\sum_me^{i\theta_m}P_m = \sum_me^{i\theta_m P_m} 
\end{equation}
Let us define $ A:= \sum_m \theta_m P_m $, this implies that $ A $ is compact. Let us show that $ A $ is trace class.
\begin{align}
\| A \|_1 = \sum_m |\theta_m| \le \frac{\pi}{2} \sum_m | 1-e^{i\theta_m} | =  \frac{\pi}{2} \| \I-VU^{-1} \|_1 < \infty 
\end{align}
\end{proof}

%-------------------------------------------------------------------
%                                                LOG LAW
%-------------------------------------------------------------------

\section{Proof of theorem \ref{logarithmic_law}} \label{logarithmic_law_proof}
\begin{proof}
To ease the notation $ A $, $ B $ and $ C $ are taken skew self-adjoint.
The strategy of the proof consists of seeing $ \B(e^Ae^B,e^C) $ as the boundary value of the solution of a differential equation. The unitary operator $ e^{tA}e^{tB}e^{tC}e^{-tB}e^{-tA}e^{-tC} $ is an entire function in the variable $ t $ of the type $ \I + \textrm{trace class} $, for a proof see \cite{Ehrhardt}, moreover with $ \|[e^{tA}, e^{tC}] \| <1 $ and $ \|[e^{tB}, e^{tC}] \| <1 $ it follows $ \|[e^{tA}e^{tB}, e^{tC}] \| <2 $ for all $ t \in [0,1] $. This ensure that $  \log \left( e^{tA}e^{tB}e^{tC}e^{-tB}e^{-tA}e^{-tC} \right) $ is well defined and trace class. Let us define $ g(t):= \T \log \left(  e^{tA}e^{tB}e^{tC}e^{-tB}e^{-tA}e^{-tC} \right) $, then $ g(0)=0 $.
\begin{align}
 g'(t)= & \T ( Ae^{tA}e^{tB}e^{tC}e^{-tB}e^{-tA}e^{-tC} + e^{tA}Be^{tB}e^{tC}e^{-tB}e^{-tA}e^{-tC} \\ 
 & \hspace{5mm} + e^{tA}e^{tB}Ce^{tC}e^{-tB}e^{-tA}e^{-tC} - e^{tA}e^{tB}e^{tC}Be^{-tB}e^{-tA}e^{-tC} \\
 & \hspace{5mm}- e^{tA}e^{tB}e^{tC}e^{-tB}Ae^{-tA}e^{-tC} - e^{tA}e^{tB}e^{tC}e^{-tB}e^{-tA}Ce^{-tC} ) \\
 & \hspace{5mm} \cdot e^{tC}e^{tA}e^{tB}e^{-tC}e^{-tB}e^{-tA}\\
 =& \T ( e^{-tB}Ae^{tB} + B + C - e^{tC}Be^{-tB} - \\
 & \hspace{5mm} e^{tC}e^{-tB}Ae^{tB}e^{-tC} - e^{tC}e^{-tB}e^{-tA}Ce^{tA}e^{tB}e^{-tC} ) \\
 =& \T \left( e^{-tC}e^{-tB}Ae^{tB}e^{tC} + e^{-tC}B e^{tC} + C - B - e^{-tB}Ae^{tB} - e^{-tB}e^{-tA}Ce^{tA}e^{tB} \right) \\
 =& \T \left( e^{-tC}e^{-tB}Ae^{tB}e^{tC} - e^{-tB}Ae^{tB} \right) + \T \left( e^{-tC}B e^{tC} - B \right) \\
 &+ \T \left(C - e^{-tB}e^{-tA}Ce^{tA}e^{tB} \right) \\
 =& \T \left( \int_0^1 ds e^{-stC}[-Ct,e^{-tB}Ae^{tB}]e^{stC} \right) +  \T \left( \int_0^1 ds e^{-stC}[-Ct,B]e^{stC} \right)  \\
 & + \T \left(C -  e^{-tA}Ce^{tA} + e^{-tA}Ce^{tA} - e^{-tB}e^{-tA}Ce^{tA}e^{tB} \right) \\ 
 =& -t \T \left( [C,e^{-tB}Ae^{tB}] \right) - t \T \left( [C,B] \right) + \T \left( \int_0^1 ds e^{-stA}[C,-tA]e^{stA} \right) \\
 & \hspace{5mm} + \T \left( \int_0^1 ds e^{-stB}[tB,e^{-tA}Ce^{tA}]e^{stB} \right) \\
 =& -t \T \left( [e^{tB}Ce^{-tB},A] \right) -t \T \left( [C,B] \right) -t \T ([C,A]) + \T \left( [tB,e^{-tA}Ce^{tA}] \right) \\
 =& 2t \T ([A,C]) + 2t \T ([B,C]) \\
 &g(t)-g(0)=\int_0^t ds g'(s) = t^2 \T ([A,C]) + t^2 \T ([B,C]) 
\end{align}
With the addition constraints $ (e^A-\I)(e^C-\I) \in \mathcal{I}_1 $, $ (e^C-\I)(e^A-\I) \in \mathcal{I}_1 $, $ (e^B-\I)(e^C-\I) \in \mathcal{I}_1 $, $ (e^C-\I)(e^B-\I) \in \mathcal{I}_1 $, ensuring that the Bott index is an integer, it follows:  
\begin{align}
\B(e^Ae^B,e^C) = \frac{g(1)}{2 \pi i} = \frac{1}{2 \pi i} ( [A,C]) + [B,C]) = \B(e^A,e^C) + \B(e^B,e^C)
\end{align}
We note that $ (e^A-\I)(e^C-\I) \in \mathcal{I}_1 $, $ (e^C-\I)(e^A-\I) \in \mathcal{I}_1 $, $ (e^B-\I)(e^C-\I) \in \mathcal{I}_1 $, $ (e^C-\I)(e^B-\I) \in \mathcal{I}_1 $ implies $ (e^Ae^B-\I)(e^C-\I) \in \mathcal{I}_1 $, $ (e^C-\I)(e^Ae^B-\I) \in \mathcal{I}_1 $    in fact:
\begin{align}
 &(e^Ae^B-\I)(e^C-\I) = e^A(e^B-e^{-A})(e^C-\I) = e^A(e^B-\I -e^{-A}+\I)(e^C-\I) \\
 & = e^A(e^B-\I)(e^C-\I) - e^A(e^{-A}-\I)(e^C-\I) \\ & =  e^A(e^B-\I)(e^C-\I) - (\I-e^A)(e^C-\I) \in \mathcal{I}_1 
\end{align}
This guarantees that $ \B(e^Ae^B,e^C) $ is an integer.
\end{proof}

\bibliography{biblio}

\end{document}